\newcommand{\sumid}[2]{\begin{subarray}{c}
#1\\
#2
\end{subarray}}
\newcommand{\Udec}{\bar{U}_{dec}}
\newcommand{\UPt}{U_{\vec{P}_t}}
\newcommand{\SumPT}{\sum_{\vec{P}_T}Pr(\vec{P}_T)}
\newcommand{\Sumt}{\sum_{t=0}^{T}}
\newcommand{\Sumk}{\sum_{k=1}^{n-m}}
\newcommand{\ket}[1]{\big| #1 \big\rangle}
\newcommand{\bra}[1]{\big\langle #1 \big|}
\newcommand{\braket}[2]{\big\langle #1 \big| #2 \big\rangle}                 
\newcommand{\bracket}[3]{\big\langle #1 \big| #2 \big| #3 \big\rangle}       
\newtheorem{theorem}{Theorem}[section]
\newtheorem{lemma}[theorem]{Lemma}
\newtheorem{corollary}[theorem]{Corollary}
\newtheorem{definition}[theorem]{Definition}
\newenvironment{proof}[1][Proof]{\begin{trivlist}
\item[\hskip \labelsep {\bfseries #1}]}{\end{trivlist}}
\newcommand{\qed}{\nobreak \ifvmode \relax \else
      \ifdim\lastskip<1.5em \hskip-\lastskip
      \hskip1.5em plus0em minus0.5em \fi \nobreak
      \vrule height0.75em width0.5em depth0.25em\fi}
\begin{document}


\title{Decoherence in Quantum Markov Chains}
\author{Raqueline A. M. Santos, Renato Portugal, Marcelo D. Fragoso\\
\\
{\small Laborat\'{o}rio Nacional de Computa\c{c}\~{a}o Cient\'{i}fica (LNCC)} \\
{\small Av. Get\'{u}lio Vargas 333, 25651-075, Petr\'{o}polis, RJ, Brazil}\\
{\small \{raqueline,portugal,frag\}@lncc.br}\\
}

\date{}
\maketitle

\begin{abstract}
It is known that under some assumptions the hitting time in quantum Markov chains is quadratically smaller than the hitting time in classical Markov chains. This work extends this result for decoherent quantum Markov chains. The decoherence is introduced using a percolation-like graph model, which allows us to define a decoherent quantum hitting time and to establish a decoherent-intensity range for which the decoherent quantum hitting time is quadratically smaller than the the classical hitting time. The detection problem under decoherence is also solved with quadratic speedup in this range.
\end{abstract}

\section{Introduction}
In Computer Science, Markov chains are employed in randomized algorithms such as searching algorithms on graphs. The expected time to reach a vertex for the first time, known as {\it hitting time}, plays an important role in those algorithms as the running time to find a solution. For instance, randomized algorithms are used to address the $k$-SAT and the graph connectivity problem~\cite{Motwani:1995}.

In the classical case, Markov chains and random walks are equivalent formalisms. In the quantum case, there are three versions of quantum walks: 1)~discrete-time quantum walks~\cite{Aharonov:1993}, 2)~continuous-time quantum walks~\cite{Farhi:1998}, and 3)~Szegedy's formalism~\cite{Szegedy:2004}. All of them have been used for developing quantum algorithms that outperform their classical versions~\cite{Shenvi:2003,Ambainis:2005,Childs:2004,Ambainis:2004,Krovi:2010,Venegas:2012}. These models use Hilbert spaces of different size and they seem to be non-equivalent. Reference~\cite{Shikano:2013} reviews the discrete-time quantum walk and its relationship with the continuous-time quantum walk. Our focus is the Szegedy's formalism, which is also called quantum Markov chain.

Szegedy~\cite{Szegedy:2004} showed that the quantum hitting time has a quadratic improvement over the classical one to detect a set of marked vertices for ergodic and symmetric Markov chains. This model was developed further by Magniez \textit{et al.}~\cite{Magniez:2009}, who described a quantum algorithm with quadratic speedup for finding a marked vertex on reversible, state-transitive Markov chains restricted to the case of only one marked vertex. Recently, Krovi {\it et al.}~\cite{Krovi:2010} showed that the quadratic speedup for finding a marked vertex also holds for any reversible Markov chain using a new interpolating algorithm. Santos and Portugal~\cite{Santos:2010} worked out the details of Szegedy's model on the complete graph, obtaining analytical results.

When implementing quantum systems, decoherence problems are inevitable. It is impossible to completely isolate a real physical system and interactions with the environment will reduce or possibly destroy quantum coherence. These generally undesired effects are present in quantum walk implementations. Hence, it is crucial to understand how decoherence affects them. Many papers address this issue. Brun {\it et al.}~\cite{Brun:2003} analyzed the transition of a discrete-time quantum walk to a classical random walk by introducing decoherence in the coin operator. Kendon and Tregenna~\cite{Kendon:2003} also studied decoherence in the coin operator and highlighted some useful application of decoherence. Romanelli {\it et al.}~\cite{Romanelli:2005} analyzed the decoherence produced by random broken links of a one-dimensional lattice. This technique was generalized for the two-dimensional case by Oliveira {\it et al.}~\cite{Oliveira:2006}. Alagic and Russel~\cite{Alagic:2005} analyzed the effect of performing independent measurements on the continuous-time quantum walk on the hypercube. A review on decoherence in quantum walks can be found in~\cite{Kendon:2007}.

Decoherence inspired by percolation graphs was analyzed in many papers~\cite{Romanelli:2005,Oliveira:2006,Xu:2008,Leung:2010,Lovett:2011}, using the discrete-time and the continuous-time quantum walk models.  Recently, Kollar {\it et al}~\cite{Kollar:2012} analyzed the asymptotic behavior of discrete-time quantum walks affected by a similar model of decoherence. Previous to our work, decoherence in Szegedy's formalism was studied by Chiang and Gomez~\cite{Chiang:2010}, who analyzed the sensibility to perturbation due to system's precision limitations, by adding a symmetric matrix $E$, representing the noise, to the transition probability matrix. The quadratic speedup vanishes when the magnitude of the noise $||E||$ is greater or equal to $\Omega(\delta(1-\delta\epsilon))$, where $\delta$ is the spectral gap of the transition probability matrix of the graph and $\epsilon$ is the ratio between the number of marked vertices and the number of vertices of the graph. 

Our contribution in this paper is to investigate the effects of decoherence on quantum Markov chains using percolation-like graphs. Our focus is on the quantum hitting time behavior in the presence of this kind of decoherence. By performing averages over all possible evolution operators affected by the decoherence, we are able to define a decoherent quantum hitting time. We obtain a percolation-probability range in which the quadratic speedup on the quantum hitting time for ergodic and symmetric Markov chains is still valid. We show that the problem of detecting a set of marked vertices is also solved with a quadratic speedup in this range.

This paper is organized as follows. In Sec.~\ref{sec:qw}, we review Szegedy's quantum walk and the definition for the quantum hitting time. In Sec.~\ref{sec:dm}, we describe the decoherence model on Szegedy's quantum walk and we define the decoherent quantum hitting time. The bound obtained for the decoherent quantum hitting time is presented in Sec.~\ref{sec:teo}. In Sec.~\ref{sec:detect}, we analyze the Detection Problem and, in Sec.~\ref{sec:conc}, we draw our conclusions.

\section{{Quantum Markov Chains}}\label{sec:qw}

Szegedy~\cite{Szegedy:2004} has proposed a quantum walk driven by reflection operators in an associated bipartite graph obtained from the original one by a process of duplication.
Let $\Gamma(X,E)$ be a connected, undirected and non-bipartite graph, where $X$ is the set of vertices and $E$ is the set of edges. Consider that the stochastic matrix $P$ associated with this graph is symmetric and $p_{xy}$ are its components. Define a bipartite graph associated with $ \Gamma (X, E) $ through a process of duplication. $ X $ and $ Y $ are the sets of vertices of same cardinality of the bipartite graph. Each edge $ \{x_i, x_j\} $ in $ E $ of the original graph $\Gamma(X,E)$ is converted into two edges in the bipartite graph $\{x_i,y_j\}$ and $\{y_i,x_j\}$.

To define a quantum walk in the bipartite graph, we associate with the graph a Hilbert space ${\cal H}^{n^2} = {\cal H}^{n}\otimes {\cal H}^{n} $, where $ n = | X | = | Y | $. The computational basis of the first component is $ \big \{\ket {x}: x \in X \big \} $ and of the second $ \big \{\ket{y}: y \in Y \big \} $. The computational basis of $ {\cal H}^{n^2} $ is $ \big \{\ket {x, y}: x \in X, y \in Y \big \} $.
The quantum walk on the bipartite graph is defined by the evolution operator $ U_P $ given by
\begin{equation}\label{ht_U_ev}
    U_P := \cal R_B \, \cal R_A,
\end{equation}
where
\begin{eqnarray}
  \cal R_A &=& 2AA^{T} - I_{n^2}, \label{ht_RA}\\
  \cal R_B &=& 2BB^{T} - I_{n^2}, \label{ht_RB}
\end{eqnarray}
with the operators $ A: {\cal H}^n \rightarrow {\cal H}^{n^2} $ and $ B: {\cal H}^n \rightarrow {\cal H}^{n^2} $ defined as follows
\begin{eqnarray}
  A &=& \sum_{x\in X} \ket{\Phi_x}\bra{x}, \label{ht_A}\\
  B &=& \sum_{y\in Y} \ket{\Psi_y}\bra{y}, \label{ht_B}
\end{eqnarray}
and
\begin{eqnarray}
  \ket{\Phi_x} &=& \ket{x}\otimes \left(\sum_{y\in Y} \sqrt{p_{x y}} \, \ket{y}\right), \label{ht_alpha_x} \\
  \ket{\Psi_y}  &=&  \left(\sum_{x\in X} \sqrt{p_{y x}} \, \ket{x}\right)\otimes \ket{y}. \label{ht_beta_y}
\end{eqnarray}

In the bipartite graph, an application of $ U_P $ corresponds to two quantum steps of the walk, from $X$ to $Y$ and from $Y$ to $X$. We have to take the partial trace over the space associated with $ Y $ to obtain the state on the set $ X $.

\subsection{Coherent Quantum Hitting Time}\label{sec:qht}

Instead of using the stochastic matrix $P$, Szegedy defined the {\it quantum hitting time} by using a modified evolution operator $U_{P'}$ associated with a modified stochastic matrix $P^\prime$, which is given by
\begin{equation}\label{ht_pprime}
    p_{x y}^\prime = \left\{
                       \begin{array}{ll}
                         p_{x y}, & \hbox{$x\not\in M$;} \\
                         \delta_{x y}, & \hbox{$x\in M$,}
                       \end{array}
                     \right.
\end{equation}
where $M$ is the set of marked vertices. The initial condition of the quantum walk is
\begin{equation}\label{ht_inicond}
    \ket{\psi(0)} = \frac{1}{\sqrt n} \sum_{
\begin{subarray}{c}
{x\in X}\\
y\in Y
\end{subarray}
} \sqrt{p_{xy}} \ket{x,y}.
\end{equation}
Note that $\ket{\psi(0)}$ is an eigenvector of $ U_P $ with eigenvalue $1$. However, $\ket{\psi(0)}$ is not an eigenvector of $ U_{P^\prime} $ in general.

\begin{definition} \cite{Szegedy:2004}  The \textit{quantum hitting time} $H_{P,M}$ of a quantum walk with evolution operator $U_P$ given by Eq.~(\ref{ht_U_ev}) and initial condition $\ket{\psi(0)}$ is defined as the least number of steps $ T $ such that
\begin{equation}
    F(T) \geq 1-\frac{m}{n},
\end{equation}
where $ m $ is the number of marked vertices, $ n $ is the number of vertices of the original graph
and $F(T)$ is
\begin{equation}\label{ht_D_T}
    F(T) = \frac{1}{T+1}\sum_{t=0}^T \Big\| U_{P^\prime}^t\ket{\psi(0)}-\ket{\psi(0)}\Big\|^2,
\end{equation}
where $U_{P^\prime}^t$ is the evolution operator after $t$ steps using the modified stochastic matrix.
\end{definition}

The key operator to find the spectral decomposition of $U_{P^\prime}$ is $C = A^{T}B$~\cite{Szegedy:2004}. The components of $C_{x y}$ are $\sqrt{p_{x y} p_{y x}}$. We have to replace $p_{x y}$ by $p_{x y}^\prime$. Then, we have
\begin{equation}\label{ht_C_Cy}
    C = \left(\begin{array}{cc}
  P_M & 0 \\
  0 & I_m \\
\end{array}\right),
\end{equation}
where $P_M$ is the matrix that we obtain from $P$ by deleting its rows and columns indexed by $M$. The eigenvectors and eigenvalues of the evolution operator are obtained from the singular decomposition of the operator C. Therefore, its singular values and vectors are directly related to the spectral decomposition of $P_M$ and we can write each of these singular values as $\cos\theta$ for some $0\leq \theta \leq \frac{\pi}{2}$.

Lemma~6 of Szegedy's paper~\cite{Szegedy:2004} states that $H_{P,M}$ is at most
\begin{equation}\label{eq:HT}
\frac{100}{1-\frac{m}{n}}\sum_{k=1}^{n-m}\frac{\nu_{k}^{2}}{\sqrt{1-\lambda_{k}'}},
\end{equation}
where $\lambda_{1}',\hdots,\lambda_{n-m}'$ are the eigenvalues of $P_{M}$ and $\ket{v_{1}'},\hdots, \ket{v_{n-m}'}$ are the associated normalized eigenvectors. Coefficients $\nu_{k}$ are defined such that $\ket{\hat{u}} = \sum_{k=1}^{n-m}\nu_{k}\ket{v_{k}'}$ for $\ket{\hat{u}} = \dfrac{1}{\sqrt{n}}\textrm{\bf 1}$, where $\textrm{\bf 1}$ is the $(n-m)$-dimensional vector with entries equal to 1.

\section{Decoherent Quantum Hitting Time}\label{sec:dm}

Percolation model was introduced in the context of a porous media for analyzing liquid flows, and is now a paradigm model of statistical physics~\cite{Stauffer:1994}. The decoherence model inspired by percolation can be explained in the following way: Suppose that a walker is on a vertex of a graph. Before moving to the neighboring vertices, each edge can be removed and each non-edge can be inserted with probability $p$. With probability $1-p$, an edge stays unchanged and the same is valid for a non-edge. After this change in the graph topology, the walker moves following the dynamics of the model. The original graph is reset and the process is repeated over in the next step. This decoherence model is called bond percolation.

The occurrence probability of a given $P_i$ is determined as follows. If $0< p<1$, then $Pr(P_i) = (1-p)^{a_c - a_d}p^{a_d}$, where $a_c = \frac{n(n-1)}{2}$ is the number of edges of the complete graph with $n$ vertices and $a_d$ is the number of edges removed plus the number of edges included to obtain $P_i$ from $P$. If $p=0$, $Pr(P_i=P) = 1$, and $Pr(P_i \neq P) = 0$. And, if $p=1$, we have $Pr(P_i = \bar{P}) = 1$, and $Pr(P_i \neq \bar{P}) = 0$, where $\bar{P}$ is the complement of $P$. The evolution under maximum decoherence occurs when $p=1/2$, since at each step  a random graph is selected.

Another decoherence model can be analyzed at this point. If only removal of edges are allowed (no insertions), $a_c$ must be replaced by the number of edges of the original graph and $a_d$ will be the number of removed edges.

The dynamics under decoherence has a new behavior, because at each step the graph changes. This process changes the transition probability matrix, which also changes the evolution operator. Therefore, instead of having a usual walk evolving as $\ket{\psi(t)} = U_{P}^t\ket{\psi(0)}$, we have
\begin{equation}
\ket{\psi(t)} = U_{P_t} U_{P_{t-1}}\cdots U_{P_1}\ket{\psi(0)} =: U_{\vec{P}_t}\ket{\psi(0)},
\end{equation}
where $\vec{P}_t = (P_1,\hdots,P_{t-1},P_{t})$ and $U_{\vec{P}_t} = U_{P_t} U_{P_{t-1}}\cdots U_{P_1}$. $P_i$'s are not necessarily equal and they are independent and identically distributed. They are obtained from $P$ and for each $P_i$ we have a $P'_i$ associated depending on the cardinality of $M$.

In this context, it is useful to define an operator that will represent the behavior of the operators affected by the decoherence. Let
\begin{equation}
\bar{U}_{dec} := \sum_{P}Pr(P)U_P,
\end{equation}
be the operator obtained by performing an average over all possible evolution operators affected by the decoherence. The following result shows that the average over all possible sequences $\vec{P}$, with size $T$, according to its probability distribution, is equal to $\bar{U}_{dec}^T$.

\begin{lemma}\label{lemma:UbarT} For $t\leq T$ we have
\begin{equation}
\sum_{\vec{P}_T}Pr(\vec{P}_T)U_{\vec{P}_t}= \bar{U}_{dec}^t.
\end{equation}
\end{lemma}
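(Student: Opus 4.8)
The plan is to exploit the fact that the matrices $P_1,\ldots,P_T$ are independent and identically distributed, so that the joint probability factorizes as $Pr(\vec{P}_T)=\prod_{i=1}^{T}Pr(P_i)$, together with the observation that the operator $\UPt=U_{P_t}U_{P_{t-1}}\cdots U_{P_1}$ depends only on the first $t$ matrices of the sequence. Consequently, the summation over the trailing matrices $P_{t+1},\ldots,P_T$ acts only on the probability weights and can be carried out first, collapsing the length-$T$ sum to a length-$t$ sum.

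First I would write the left-hand side explicitly as
\begin{equation}
\SumPT\UPt=\sum_{P_1,\ldots,P_T}\Big(\prod_{i=1}^{T}Pr(P_i)\Big)\,U_{P_t}U_{P_{t-1}}\cdots U_{P_1},
\end{equation}
and then separate the indices into the \emph{active} ones $P_1,\ldots,P_t$, which appear in the operator product, and the \emph{inactive} ones $P_{t+1},\ldots,P_T$, which appear only in the probability factors. Since each distribution is normalized, $\sum_{P_j}Pr(P_j)=1$, summing over the inactive indices contributes a factor $1^{\,T-t}=1$ and reduces the expression to
\begin{equation}
\sum_{P_1,\ldots,P_t}\Big(\prod_{i=1}^{t}Pr(P_i)\Big)\,U_{P_t}U_{P_{t-1}}\cdots U_{P_1}.
\end{equation}

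To finish I would identify this sum with $\Udec^{\,t}$. A clean route is induction on $t$: factoring out the outermost term $\sum_{P_t}Pr(P_t)U_{P_t}=\Udec$ leaves exactly the same sum with $t$ replaced by $t-1$, so $\Udec$ can be peeled off $t$ times, the base case $t=0$ giving the identity (empty product), using once more that the probabilities sum to $1$. Equivalently, expanding $\Udec^{\,t}=\big(\sum_{P}Pr(P)U_P\big)^t$ into $t$ independent dummy sums and relabeling the indices reproduces the displayed expression directly.

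The main point to watch is the operator ordering: because the $U_{P_i}$ need not commute, the product must be kept in the order $U_{P_t}\cdots U_{P_1}$ throughout, and when matching with the expansion of $\Udec^{\,t}$ the dummy indices must be paired in reverse order. This is bookkeeping rather than analysis; the only genuine ingredients are the i.i.d.\ factorization of $Pr(\vec{P}_T)$ and the normalization $\sum_{P}Pr(P)=1$, so no estimate or limiting argument is required.
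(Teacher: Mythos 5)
Your proposal is correct and follows essentially the same route as the paper: factor $Pr(\vec{P}_T)=\prod_i Pr(P_i)$, marginalize the indices $P_{t+1},\dots,P_T$ using normalization, and iteratively peel off $\sum_{P_i}Pr(P_i)U_{P_i}=\bar{U}_{dec}$ while respecting the operator ordering. The only difference is cosmetic: the paper peels the factors starting from $P_1$ before summing out the trailing indices, whereas you sum out the inactive indices first.
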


\begin{proof}
Since $ Pr(\vec{P}_T) = \prod_{i=1}^TPr(P_i)$, we have
\begin{eqnarray*}
&&\sum_{\vec{P}_T}Pr(\vec{P}_T)U_{P_t}U_{P_{t-1}}\cdots U_{P_1} = \sum_{\vec{P}_T} \prod_{i=1}^TPr(P_i)U_{P_t}U_{P_{t-1}}\cdots U_{P_1}\\
&&=\sum_{P_T}\sum_{P_{T-1}}\cdots\sum_{P_2}\left( \prod_{i=2}^TPr(P_i)\right)U_{P_t}U_{P_{t-1}}\cdots U_{P_2}\left(\sum_{P_1}Pr(P_1)U_{P_1}\right)\\
&&=\sum_{P_T}\sum_{P_{T-1}}\cdots\sum_{P_{t+1}} Pr(P_T)Pr(P_{T-1})\cdots Pr(P_{t+1})\bar{U}_{dec}^t\\
&&= \bar{U}_{dec}^t
\end{eqnarray*} \qed
\end{proof}

To define the quantum hitting time for the decoherent evolution, we have to do an average over all possible sequences $\vec{P}$. Define,
\begin{equation}\label{eq:ofdec}
F_{dec}(T) := \sum_{\vec{P}_T}Pr(\vec{P}_T)\left( \frac{1}{T+1}\sum_{t=0}^T\Big\|U_{\vec{P}_t}\ket{\psi(0)}-\ket{\psi(0)} \Big\|^2\right).
\end{equation}

\begin{lemma}\label{lemma:fdec}
\begin{equation}\label{eq:F_dec}
F_{dec}(T) =2-\frac{2}{T+1}\sum_{t=0}^T\bracket{\psi(0)}{\bar{U}_{dec}^t}{\psi(0)}.
\end{equation}
\end{lemma}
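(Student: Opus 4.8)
The plan is to reduce the definition of $F_{dec}(T)$ to a sum of matrix elements of $\UPt$ and then to invoke Lemma~\ref{lemma:UbarT}. First I would expand the squared norm appearing inside the definition. For a fixed sequence $\vec{P}_t$, using that $\ket{\psi(0)}$ is normalized,
\begin{equation*}
\norms{\UPt\ket{\psi(0)}-\ket{\psi(0)}}^2 = \bracket{\psi(0)}{\UPt^\dagger \UPt}{\psi(0)} - \bracket{\psi(0)}{\UPt}{\psi(0)} - \bracket{\psi(0)}{\UPt^\dagger}{\psi(0)} + \braket{\psi(0)}{\psi(0)}.
\end{equation*}
Each factor $U_P = \cal R_B\,\cal R_A$ is unitary, hence so is the product $\UPt$, so $\UPt^\dagger\UPt = I$ and both the first and last terms equal $\braket{\psi(0)}{\psi(0)} = 1$. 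This leaves $2 - \bracket{\psi(0)}{\UPt}{\psi(0)} - \bracket{\psi(0)}{\UPt^\dagger}{\psi(0)}$.

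Second, I would argue that the two remaining inner products coincide. The reflections $\cal R_A = 2AA^T - I_{n^2}$ and $\cal R_B = 2BB^T - I_{n^2}$ are real matrices, since $A$ and $B$ are built from the real amplitudes $\sqrt{p_{xy}}$; consequently every $U_P$, every product $\UPt$, and the initial state $\ket{\psi(0)} = \frac{1}{\sqrt n}\sum\sqrt{p_{xy}}\ket{x,y}$ are real. For a real operator $U$ and a real vector, $\bracket{\psi(0)}{U^\dagger}{\psi(0)} = \bracket{\psi(0)}{U^T}{\psi(0)} = \bracket{\psi(0)}{U}{\psi(0)}$, since this $1\times 1$ scalar equals its transpose. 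Thus the squared norm collapses to $2 - 2\bracket{\psi(0)}{\UPt}{\psi(0)}$.

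Third, I would substitute this expression into $F_{dec}(T)$ and interchange the two finite sums, over $\vec{P}_T$ and over $t$. The constant piece contributes $\SumPT\frac{1}{T+1}\Sumt 2 = 2$, using $\SumPT 1 = 1$. For the remaining piece, linearity of the inner product moves the average inside, giving $\SumPT\bracket{\psi(0)}{\UPt}{\psi(0)} = \bracket{\psi(0)}{\big(\SumPT\UPt\big)}{\psi(0)}$, and Lemma~\ref{lemma:UbarT} replaces the averaged operator by $\Udec^t$. Collecting the two pieces yields Eq.~(\ref{eq:F_dec}).

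The only genuinely delicate point is the second step: the stated identity carries no real-part symbol and no $\UPt^\dagger$ term, so one must justify discarding $\bracket{\psi(0)}{\UPt^\dagger}{\psi(0)}$. The cleanest justification is the reality of all the reflection operators and of $\ket{\psi(0)}$, as above. Without this observation one would obtain only $2 - \frac{2}{T+1}\Sumt \mathrm{Re}\,\bracket{\psi(0)}{\Udec^t}{\psi(0)}$, and removing the real part would still require exactly the same reality argument; so I would state the reality of the matrices explicitly rather than leave it implicit.
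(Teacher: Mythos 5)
Your proposal is correct and follows essentially the same route as the paper: expand the squared norm, use unitarity and the reality of the evolution operators and initial state to collapse it to $2-2\bracket{\psi(0)}{\UPt}{\psi(0)}$, then interchange sums and apply Lemma~\ref{lemma:UbarT}. The paper's proof is just a terser version of yours, citing the reality of the operators and initial condition without spelling out the transpose argument.
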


\begin{proof}
 Expanding Eq.~\eqref{eq:ofdec} and using that the initial condition and the evolution operators are real, we obtain
\begin{eqnarray}
F_{dec}(T) &=&\sum_{\vec{P}_T}Pr(\vec{P}_T)\left(\frac{1}{T+1}\sum_{t=0}^T\left(2-2\bracket{\psi(0)}{U_{\vec{P}_t}}{\psi(0)} \right)\right)\nonumber\\
&=&\frac{1}{T+1}\sum_{t=0}^T\left(2-2\bra{\psi(0)}{\left(\sum_{\vec{P}_T}Pr(\vec{P}_T)U_{\vec{P}_t}\right)}\ket{\psi(0)} \right).\label{eq:Fdecfromlemma}
\end{eqnarray}
Using Lemma~1, we obtain Eq.~\eqref{eq:F_dec}.
\qed\end{proof}

Now, we can naturally define the decoherent quantum hitting time (DQHT), using the expression of $F_{dec}$ obtained in Lemma~2.

\begin{definition}
The \textit{decoherent quantum hitting time} $H_{P,M}^{dec}$ of a quantum walk with evolution operator $U_P$ given by Eq.~(\ref{ht_U_ev}) and initial condition $\ket{\psi(0)}$ given by Eq.~(\ref{ht_inicond}) is defined as the least number of steps $ T $ such that
\begin{equation}
    F_{dec}(T) \geq 1-\frac{m}{n}.
\end{equation}
\end{definition}
Note that when $p=0$, we have the original definition, since $\bar{U}_{dec} = U_{P'}$.

\section{Bounds on DQHT}\label{sec:teo}

It is important to mention that we are considering ergodic Markov chains with symmetric transition matrix. Thus, the following result generalizes the result from Szegedy~\cite{Szegedy:2004} by introducing a decoherence term.

\begin{theorem}\label{teo:dqht}
The \textit{decoherent quantum hitting time} $H_{P,M}^{dec}$ of a quantum walk with evolution operator $U_P$, given by Eq.~(\ref{ht_U_ev}), initial condition $\ket{\psi(0)}$, and $p \leq \frac{1}{400a_cE}$ where
\begin{equation}\label{eq:E}
    E = \frac{1}{1-\frac{m}{n}}\sum_{k=1}^{n-m} \frac{\nu_k^2}{\arccos(\lambda_k')},
\end{equation}
is at most
\begin{equation}\label{eq:decHT}
\frac{8}{1-\frac{m}{n}}\sum_{k=1}^{n-m}\frac{\nu_{k}^{2}}{\sqrt{1-\lambda_{k}'}}+
\frac{1434\,a_c\,p}{\big(1-\frac{m}{n}\big)^2}\left(\sum_{k=1}^{n-m}\frac{\nu_{k}^{2}}{\sqrt{1-\lambda_{k}'}}\right)^2.
\end{equation}
\end{theorem}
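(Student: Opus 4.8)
The plan is to bound $H_{P,M}^{\text{dec}}$ by comparing $F_{\text{dec}}(T)$ against the coherent $F(T)$ and showing that the decoherence term only perturbs the coherent bound by a controlled amount. By Lemma~\ref{lemma:fdec} we have the clean expression
\[
F_{\text{dec}}(T) = 2 - \frac{2}{T+1}\sum_{t=0}^T \bracket{\psi(0)}{\bar{U}_{dec}^t}{\psi(0)},
\]
so the whole problem reduces to controlling the matrix elements $\bracket{\psi(0)}{\bar{U}_{dec}^t}{\psi(0)}$. The first step is to write $\bar{U}_{dec} = \sum_P Pr(P)\,U_P$ and split off the $p=0$ contribution: since $Pr(P'=P')$ dominates when $p$ is small, I would write $\bar{U}_{dec} = (1-p)^{a_c}U_{P'} + \Delta$, where $\Delta$ collects all the terms corresponding to graphs that differ from the original, each carrying at least one factor of $p$. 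The coefficient $(1-p)^{a_c}$ is the probability that no edge is flipped. The key quantitative estimate will be a bound of the form $\norms{\Delta} \leq C\,a_c\,p$ for some constant, obtained by summing the probabilities $Pr(P_i)=(1-p)^{a_c-a_d}p^{a_d}$ over all $P_i \neq P'$ and using that each $U_{P_i}$ is unitary with operator norm $1$.

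The second step is to propagate this decomposition through the power $\bar{U}_{dec}^t$. I would expand $\bar{U}_{dec}^t = \bigl((1-p)^{a_c}U_{P'} + \Delta\bigr)^t$ and separate the ``pure'' term $(1-p)^{a_c t}U_{P'}^t$ from the rest. Using $\norms{\bar{U}_{dec}}\leq 1$ (it is an average of unitaries) together with the bound on $\norms{\Delta}$, a telescoping/triangle-inequality argument gives
\[
\Bigl\| \bar{U}_{dec}^t - U_{P'}^t \Bigr\| \leq t\,\bigl(1-(1-p)^{a_c}\bigr) + (\text{terms from }\Delta) \leq O(t\,a_c\,p).
\]
Plugging this into the matrix-element expression and summing over $t$ from $0$ to $T$, the coherent part reproduces Szegedy's quantity $\sum_{t=0}^T\bracket{\psi(0)}{U_{P'}^t}{\psi(0)}$, while the correction contributes a term scaling like $a_c\,p\,T$ relative to the average. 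Comparing with the coherent estimate from Eq.~\eqref{eq:HT}, one sets $T$ of the order of Szegedy's hitting time bound $8/(1-\frac{m}{n})\sum_k \nu_k^2/\sqrt{1-\lambda_k'}$ and checks that the threshold $F_{\text{dec}}(T)\geq 1-\frac{m}{n}$ is met; the condition $p\leq \frac{1}{300\,a_c E}$ is precisely what keeps the decoherence correction a constant-factor perturbation rather than an unbounded one.

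The \textbf{main obstacle} I anticipate is the second step: bounding $\bar{U}_{dec}^t - U_{P'}^t$ sharply enough that the correction to $H_{P,M}^{\text{dec}}$ comes out quadratic in the coherent bound, as in the second term of Eq.~\eqref{eq:decHT}, rather than merely linear. The factor $(\sum_k \nu_k^2/\sqrt{1-\lambda_k'})^2$ and the appearance of $E$ with $\arccos(\lambda_k')$ in the denominator suggest that the argument does not treat $\bar{U}_{dec}$ as a black-box contraction, but rather works in the eigenbasis of $U_{P'}$ — writing $\ket{\psi(0)}$ in terms of Szegedy's singular vectors with angles $\theta_k$ where $\cos\theta_k$ relates to $\lambda_k'$, and tracking how the decoherence perturbs each eigenphase. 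I would therefore spend most of the effort diagonalizing the coherent part via the operator $C=A^TB$ (as recalled after Eq.~\eqref{ht_C_Cy}), expressing the sum $\sum_t \bracket{\psi(0)}{U_{P'}^t}{\psi(0)}$ spectrally, and then bounding the first-order-in-$p$ shift mode by mode. The interplay between the chosen stopping time $T$ (which determines how many powers accumulate the $p$-error) and the threshold $1-\frac{m}{n}$ is what ultimately forces both the numerical constants ($8$, $942$) and the smallness hypothesis on $p$.
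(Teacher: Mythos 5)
Your plan is essentially the paper's proof in a different packaging, and it would go through. Where you decompose the averaged operator as $\bar{U}_{dec}=(1-p)^{a_c}U_{P'}+\Delta$ with $\norms{\Delta}\leq 1-(1-p)^{a_c}\leq a_c p$ and telescope through the $t$-th power, the paper instead conditions on the whole sequence $\vec{P}_T$: it splits each of $G_M$, $G_{M,M^\bot}$, $G_{M^\bot}$ into the single undisturbed sequence $(P',\dots,P')$, occurring with probability $(1-p)^{a_cT}$, plus the rest, whose total weight $1-(1-p)^{a_cT}\leq a_cpT$ is bounded trivially using $\norms{U_{\vec{P}_t}}=1$. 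These are the same estimate (via Lemma~\ref{lemma:UbarT} the two decompositions coincide), and both reduce the undisturbed part to Szegedy's coherent analysis: the split $\ket{\psi(0)}=\ket{\psi_{M^\bot}}+\ket{\psi_M}$, the vanishing cross term, and the bound $\frac{1}{T+1}\sum_t\cos(2t\theta_k)\leq \frac{4}{(T+1)\theta_k}$ from Szegedy's Eq.~(13). So your first two steps are exactly right and are all that is needed.

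The one thing to correct is your anticipated ``main obstacle.'' No mode-by-mode eigenphase perturbation of $U_{P'}$ is required, and the quadratic term in Eq.~(\ref{eq:decHT}) does not come from second-order spectral analysis. It comes purely from solving the threshold inequality for $T$: the argument yields
\begin{equation*}
G_M+G_{M,M^\bot}+G_{M^\bot}\ \leq\ \epsilon+(1-\epsilon)\Big(\tfrac{4E}{T}-4a_cpE+3a_cpT\Big)+O(p^2),
\end{equation*}
and one needs the parenthesis to be at most $\tfrac12$. Taking $T$ equal only to the linear term $8E'$ (with $E'=\sum_k\nu_k^2/((1-\epsilon)\sqrt{1-\lambda_k'})$) gives $4E/T\leq\tfrac12$ but leaves no margin to absorb the positive $O(a_cpT)$ correction, since $E'$ can exceed $E$ by a constant factor; adding the term $942\,a_cpE'^2$ to $T$ pushes $4E/T$ below $\tfrac12$ by exactly the margin needed, and the hypothesis $p\leq\frac{1}{300a_cE}$ keeps that correction small. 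So the quadratic term is an artifact of choosing the stopping time, not of any refined treatment of $\bar{U}_{dec}$; the black-box contraction bound you already have in step two suffices, and the effort you planned to spend diagonalizing and tracking eigenphase shifts is unnecessary.
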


\begin{proof}
Using expression \eqref{eq:decHT} and replacing $\lambda_{k}'$ by $\cos\theta_k$, define
\begin{equation}
T(p)= 8\sum_{k=1}^{n-m}\frac{\nu_{k}^{2}}{(1-\epsilon)\sqrt{{1-\cos\theta_{k}}}}+1434 a_cp\left(\sum_{k=1}^{n-m}\frac{\nu_{k}^{2}}{(1-\epsilon)\sqrt{{1-\cos\theta_k}}}\right)^2,
\end{equation}
where $\epsilon = \frac{m}{n}$.  For now on, we are going to omit the dependence of $p$ for the time $T$. Using $1-\cos\alpha \geq 2\alpha^2/5$ ($\alpha \in (0,\pi/2]$), we obtain
\begin{equation}
T \leq 13\sum_{k=1}^{n-m}\dfrac{\nu_{k}^{2}}{(1-\epsilon)\theta_k} + 3585 a_cp\left(\sum_{k=1}^{n-m}\dfrac{\nu_{k}^{2}}{(1-\epsilon)\theta_k}\right)^2.
\end{equation}
Using $1-\cos\alpha \leq \alpha^2$, we obtain
\begin{equation}
T \geq 8\sum_{k=1}^{n-m}\dfrac{\nu_{k}^{2}}{(1-\epsilon)\theta_k} + 1434 a_cp\left(\sum_{k=1}^{n-m}\dfrac{\nu_{k}^{2}}{(1-\epsilon)\theta_k}\right)^2.
\end{equation}
Using $E=\sum_{k=1}^{n-m}\frac{\nu_k^2}{(1-\epsilon)\theta_k}$, we obtain
\begin{equation}\label{eq:bT}
8E+1434a_cpE^2\leq T\leq 13E+3585a_cpE^2.
\end{equation}

Let us write the initial condition as $\ket{\psi(0)} = \ket{\psi_{M^\bot}} + \ket{\psi_{M}}$, where
 \begin{align}
\ket{\psi_{M^\bot}} &= \dfrac{1}{\sqrt{n}}\sum_{\sumid{x \in X\backslash M}{y\in X}}\sqrt{p_{xy}}\ket{x}\ket{y},\\
\ket{\psi_{M}} &= \dfrac{1}{\sqrt{n}}\sum_{\sumid{x \in M}{y\in X}}\sqrt{p_{xy}}\ket{x}\ket{y}. \label{eq:psiM}
\end{align}
Note that $\|\ket{\psi_{M^\bot}}\|^2 = 1-\epsilon$ and $\|\ket{\psi_{M}}\|^2 = \epsilon$.
We want to show that $F_{dec}(T)$ is greater than or equal to $1-\frac{m}{n}$ when $T$ is in the range \eqref{eq:bT}. Using Eq.~(\ref{eq:Fdecfromlemma}) and $\ket{\psi(0)} = \ket{\psi_{M^\bot}} + \ket{\psi_{M}}$, we can write $F_{dec}(T)$ as
\begin{equation}
F_{dec}(T) = 2-2(G_M+G_{M,M^\bot}+G_{M^\bot}),
\end{equation}
where
\begin{eqnarray}
&&G_M = \frac{1}{T+1}\SumPT\Sumt\bracket{\psi_{M}}{\UPt}{\psi_{M}},\label{gm}\\
&&G_{M,M^\bot} = \frac{1}{T+1}\SumPT\Sumt\left(\bracket{\psi_{M^\bot}}{\UPt}{\psi_{M}}+\bracket{\psi_{M}}{\UPt}{\psi_{M^\bot}}\right),\label{gmmbot}\\
&&G_{M^\bot} = \frac{1}{T+1}\SumPT\Sumt\bracket{\psi_{M^\bot}}{\UPt}{\psi_{M^\bot}}.\label{gmbot}
\end{eqnarray}
Let us establish bounds for $G_M$, $G_{M,M^\bot}$ and $G_{M^\bot}$:
\begin{equation}\label{eq:gm2}
G_M \leq \frac{1}{T+1}\SumPT\Sumt\braket{\psi_{M}}{\psi_{M}} = \epsilon.
\end{equation}
Expression~(\ref{gmmbot}) can be expanded in two terms
\begin{equation}\label{eq:gmm}
\begin{split}
G_{M,M^\bot}&=\frac{1}{T+1}Pr\left(\vec{P}_T=(P',...,P')\right)\Sumt\Big( \bracket{\psi_{M^\bot}}{U_{P'}^t}{\psi_{M}}+\bracket{\psi_{M}}{U_{P'}^t}{\psi_{M^\bot}}\Big)\\
&+\frac{1}{T+1}\sum_{\vec{P}_T\neq (P',...,P')}Pr(\vec{P}_T)\Sumt\left(\bracket{\psi_{M^\bot}}{\UPt}{\psi_{M}}+\bracket{\psi_{M}}{\UPt}{\psi_{M^\bot}}\right).
\end{split}
\end{equation}
The first term of Eq.~(\ref{eq:gmm}) is zero because $\ket{\psi_{M^\bot}}$ is in the space spanned by $A\ket{w_k}$ and $B\ket{v_k}$, that is invariant under the action of $U_{P'}$. Taking $\epsilon \leq 1/2$,
\begin{equation*}
\begin{split}
\bracket{\psi_{M^\bot}}{\UPt}{\psi_{M}} +\bracket{\psi_{M}}{\UPt}{\psi_{M^\bot}} &\leq 2\max\left\{\braket{\psi_M}{\psi_M},\braket{\psi_{M^\bot}}{\psi_{M^\bot}} \right\}\\
&= 2\max\{\epsilon,1-\epsilon\} \\
&= 2(1-\epsilon).
\end{split}
\end{equation*}
 Then, using that $(1-p)^{a_cT} = 1-a_cpT+\frac{a_cTp^2}{2}(a_cT-1)+O(p^3)$, we have
\begin{equation}\label{eq:gmmbot2}
G_{M,M^\bot} \leq 2(1-\epsilon)(1-(1-p)^{a_cT})\leq 2(1-\epsilon)a_cpT.
\end{equation}
Finally for $G_{M^\bot}$, we have
\begin{eqnarray*}
G_{M^\bot}&=&\frac{1}{T+1}Pr(\vec{P}_T=(P',...,P'))\Sumt\bracket{\psi_{M^\bot}}{U_{P'}^t}{\psi_{M^\bot}} +\\
&&\frac{1}{T+1}\sum_{\vec{P}_T\neq (P',...,P')}Pr(\vec{P}_T)\Sumt\bracket{\psi_{M^\bot}}{\UPt}{\psi_{M^\bot}} \\
&\leq& \frac{(1-\epsilon)(1-p)^{a_cT}}{T+1}\Sumk\frac{\nu_k^2}{1-\epsilon}\Sumt\cos(2t\theta_k)+ (1-\epsilon)(1-(1-p)^{a_cT}).
\end{eqnarray*}
From Eq.~(13) of Szegedy's paper~\cite{Szegedy:2004}, we know that
\begin{equation}
\frac{1}{T+1}\Sumk\frac{\nu_k^2}{1-\epsilon}\Sumt\cos(2t\theta_k) \leq \frac{1}{T+1}\Sumk\frac{4\nu_k^2}{(1-\epsilon)\theta_k}.
\end{equation}
 Using again the Taylor series expansion for $(1-p)^{a_cT}$ and that $\frac{4E}{T}\leq \frac{1}{2}$, we have
\begin{equation}\label{eq:gmbot2}
G_{M^\bot}\leq(1-\epsilon)\left(\left(1-a_cpT\right)\frac{4E}{T}+a_cpT \right).
\end{equation}

From Eqs.~(\ref{eq:gm2}), (\ref{eq:gmmbot2}), and (\ref{eq:gmbot2}), we obtain
\begin{equation}
G_M+G_{M,M^\bot}+G_{M^\bot}
\leq \epsilon + (1-\epsilon)\left(\frac{4E}{T}-4a_cpE + 3a_cpT  \right).
\end{equation}
 Using that $T$ is in the range~(\ref{eq:bT}),
\begin{equation}
\frac{4E}{T}-4a_cpE + 3a_cpT  \leq \frac{1}{2(1+179.25a_cpE)}+35a_cpE+10755a_c^2p^2E^2 \leq \frac{1}{2},
\end{equation}
if we choose $p \leq \frac{1}{400a_cE}$. Then, we have
\begin{equation}
G_M+G_{M,M^\bot}+G_{M^\bot} \leq \frac{\epsilon+1}{2},
\end{equation}
and
\begin{equation}
F_{dec}(T) \geq 2 - 2\left(\frac{\epsilon+1}{2}\right) = 1-\epsilon.
\end{equation}
\qed\end{proof}

\begin{corollary}\label{col:dqht}
The decoherent quantum hitting time $H_{P,M}^{dec}$ of $U_P$ with respect to any $M \subseteq X$ with $m \leq n/2$ and $0\leq p \leq \frac{1}{400a_cE}$, where $$E = \frac{1}{1-\frac{m}{n}}\sum_{k=1}^{n-m} \frac{\nu_k^2}{\arccos(\lambda_k')},$$ is in
$O\left(\frac{1}{\sqrt{1-\lambda(P_M)}}\right)$, where $\lambda(P_M)$ is the largest eigenvalue of $P_M$.
\end{corollary}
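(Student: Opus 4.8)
The plan is to feed the explicit upper bound of Theorem~\ref{teo:dqht} into an asymptotic estimate, showing that each of its two summands is $O(1/\sqrt{1-\lambda(P_M)})$. Write $S=\sum_{k=1}^{n-m}\nu_k^2/\sqrt{1-\lambda_k'}$. Two elementary facts drive everything. First, since $\ket{\hat u}=\frac{1}{\sqrt n}\textrm{\bf 1}$ expands as $\sum_{k=1}^{n-m}\nu_k\ket{v_k'}$ in the (orthonormal, as $P_M$ is symmetric) eigenbasis of $P_M$, one has $\sum_{k=1}^{n-m}\nu_k^2=\|\hat u\|^2=1-\frac mn$. Second, every eigenvalue satisfies $\lambda_k'\le|\lambda_k'|\le\lambda(P_M)<1$ for an ergodic chain, so $1-\lambda_k'\ge 1-\lambda(P_M)$ and hence $1/\sqrt{1-\lambda_k'}\le 1/\sqrt{1-\lambda(P_M)}$.

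First I would dispatch the linear term. Combining the two facts gives $S\le (1-\frac mn)/\sqrt{1-\lambda(P_M)}$, so
\[
\frac{8}{1-\frac mn}\,S\le\frac{8}{\sqrt{1-\lambda(P_M)}},
\]
which is already of the required order. In particular this also settles the coherent case $p=0$, where the second summand vanishes.

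The quadratic term is the real obstacle, since it carries a factor $S^2$ together with an explicit $a_c p$. The idea is to use the hypothesis $p\le 1/(300\,a_c E)$, i.e.\ $a_c p\le 1/(300\,E)$, to trade the prefactor against $E$, and then to collapse $S^2/E$ back to the scale of $S$. For this I would relate $S$ and $E$ through $\theta_k=\arccos\lambda_k'$ using the same two trigonometric inequalities invoked in the proof of Theorem~\ref{teo:dqht}, namely $1-\cos\alpha\le\alpha^2$ and $1-\cos\alpha\ge\frac25\alpha^2$ on $(0,\pi/2]$. These give $\sqrt{1-\lambda_k'}\le\theta_k\le\sqrt{5/2}\,\sqrt{1-\lambda_k'}$, whence $1/\theta_k\ge\sqrt{2/5}\,/\sqrt{1-\lambda_k'}$; summing against $\nu_k^2$ yields $S\le\sqrt{5/2}\,(1-\frac mn)E$, i.e.\ $S/E\le\sqrt{5/2}\,(1-\frac mn)$. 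Therefore
\[
\frac{942\,a_c\,p}{(1-\frac mn)^2}\,S^2\le\frac{942}{300(1-\frac mn)^2}\,\frac{S^2}{E}\le\frac{942\sqrt{5/2}}{300(1-\frac mn)}\,S\le\frac{942\sqrt{5/2}}{300}\,\frac{1}{\sqrt{1-\lambda(P_M)}},
\]
where the second $S^2/E$ bound removes one power of $S$ and the last step reuses $S\le(1-\frac mn)/\sqrt{1-\lambda(P_M)}$.

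Adding the two estimates shows that $H_{P,M}^{dec}$ is at most a constant times $1/\sqrt{1-\lambda(P_M)}$, proving the claim; the hypothesis $m\le n/2$ is precisely what is needed to apply Theorem~\ref{teo:dqht}. The one delicate point, as anticipated, is the handling of $S^2/E$: the square could inflate the order, so one must first convert both $S$ and $E$ to the common scale $\sum_{k}\nu_k^2/\sqrt{1-\lambda_k'}$ via the \emph{two-sided} trigonometric bounds, after which the constraint $a_c p\le 1/(300E)$ makes the quadratic contribution the same order as the linear one.
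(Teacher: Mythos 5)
Your proof is correct and follows essentially the same route as the paper: both use the lower trigonometric bound $1-\cos\alpha\ge\frac25\alpha^2$ to turn the hypothesis $p\le\frac{1}{300a_cE}$ into a bound of the form $a_cp\lesssim(1-\frac mn)/S$, which collapses the quadratic term to the order of the linear one. The only (immaterial) difference is the final step, where the paper bounds $S$ via Cauchy--Schwarz, $\sum_k\nu_k^2/\sqrt{1-\lambda_k'}\le\sqrt{\sum_k\nu_k^2/(1-\lambda_k')}\le\sqrt{1/(1-\lambda(P_M))}$, while you bound each term directly using $1-\lambda_k'\ge1-\lambda(P_M)$ and $\sum_k\nu_k^2=1-\frac mn$.
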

\begin{proof}
Using $1-\cos\alpha \geq 2\alpha^2/5$, we obtain

\begin{equation}
E \geq \frac{1}{2}\sum_{k=1}^{n-m}\frac{\nu_k^2}{1-\epsilon}\sqrt{\frac{1}{1-\cos\theta_k}}
\end{equation}
and
\begin{equation}\label{eq:p}
p \leq \frac{1}{200a_c\sum_{k=1}^{n-m}\frac{\nu_k^2}{1-\epsilon}\sqrt{\frac{1}{1-\lambda_k'}}}.
\end{equation}
By substituting Eq.~(\ref{eq:p}) for the expression of $H_{P,M}^{dec}$ given by (\ref{eq:decHT}) and since
\begin{equation}
\sum_k^{n-m} \nu_k^2 \sqrt{\frac{1}{1-\lambda_k'}} \leq \sqrt{\sum_k^{n-m} \frac{\nu_k^2}{1-\lambda_k'}} \leq \sqrt{\frac{1}{1-\lambda(P_M)}},
\end{equation}
we conclude that $H_{P,M}^{dec} $ is in $O\left(\frac{1}{\sqrt{1-\lambda(P_M)}}\right)$.
\qed\end{proof}

Expression~(\ref{eq:decHT}) of Theorem~1 shows that the decoherent hitting time has an additional term that is proportional to the square of the usual term. If $p$ is small enough, the contribution of the new term for the hitting time scales as a linear function in terms of $p$. Corollary~1 describes a range of $p$ such that the quadratic speedup is valid.

\section{The Detection Problem}\label{sec:detect}
Let ${\cal M} \subseteq 2^X$ be a set of non-empty subsets of $M$. The Detection Problem stands as the problem of finding out if the set of marked vertices is either empty or belongs to $\cal M$.

\begin{theorem}\label{teo:detect}
Assume that $T$ is an upper bound for
\begin{equation}\label{eq:decHT2}
16\sum_{k=1}^{n-|M|}{\dfrac{\nu_{k}^{2}}{\sqrt{1-\lambda_{k}'}}}+5736 a_cp\left(\sum_{k=1}^{n-|M|}{\dfrac{\nu_{k}^{2}}{\sqrt{1-\lambda_{k}'}}}\right)^2,
\end{equation}
where $M$ runs through all elements of $\cal M$ ($\lambda'_k$ and $\nu_k$ depend on $M$) and $p$ obeys the same inequality of Theorem~1. Then, the Detection Problem can be solved within time $T$ with bounded two-sided error.
\end{theorem}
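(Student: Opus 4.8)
The plan is to turn the decoherent hitting-time bound of Theorem~\ref{teo:dqht} into a statistical test that separates the two cases of the Detection Problem. The tester I would use prepares $\ket{\psi(0)}$, samples a fresh percolation sequence $\vec{P}_t$ together with a number of steps $t$ drawn uniformly from $\{0,\dots,T\}$, applies $U_{\vec{P}_t}$ with the given marking oracle, and then estimates $\mathrm{Re}\,\bracket{\psi(0)}{U_{\vec{P}_t}}{\psi(0)}$ by a Hadamard test on an ancilla (a projective measurement onto $\ket{\psi(0)}$ is less convenient here, since it returns $|\braket{\psi(0)}{U_{\vec{P}_t}\psi(0)}|^2$ rather than the real overlap). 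Averaging this estimate over the random choices of $t$ and $\vec{P}_t$ has expected value $\tfrac{1}{T+1}\sum_{t=0}^T\bracket{\psi(0)}{\bar{U}_{dec}^t}{\psi(0)}$, so by Lemma~\ref{lemma:fdec} the tester directly samples the quantity $1-\tfrac12 F_{dec}(T)$. The argument then reduces to exhibiting a constant gap in $F_{dec}(T)$ between the empty and the marked cases and amplifying it.

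For the marked case I would invoke Theorem~\ref{teo:dqht} for each $M\in\mathcal{M}$. Because $m\le n/2$ gives $\tfrac{1}{1-m/n}\le 2$, the coefficients $16$ and $3768\,a_cp$ in~\eqref{eq:decHT2} dominate the coefficients $\tfrac{8}{1-m/n}$ and $\tfrac{942\,a_cp}{(1-m/n)^2}$ appearing in~\eqref{eq:decHT}; hence any $T$ that upper bounds~\eqref{eq:decHT2} forces the walk to have hit $M$ in the sense of Theorem~\ref{teo:dqht}, so that $F_{dec}(T)\ge 1-\tfrac{m}{n}\ge\tfrac12$ whenever the marked set belongs to $\mathcal{M}$.

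The empty case is where the real work lies, and I expect it to be the main obstacle. With $M=\emptyset$ one has $P'_i=P_i$, so the only source of motion is percolation. I would exploit that $\ket{\psi(0)}$ is a $1$-eigenvector of the unperturbed $U_P$: writing $\bar{U}_{dec}=q\,U_P+(1-q)V$ with $q=(1-p)^{a_c}$ and $\|V\|\le 1$, and using $\bra{\psi(0)}U_P=\bra{\psi(0)}$, a one-line recursion yields $\bracket{\psi(0)}{\bar{U}_{dec}^t}{\psi(0)}\ge 2q^{t}-1$, whence $F_{dec}(T)\le 2a_cpT$. The hypothesis $p\le\tfrac{1}{300a_cE}$, together with the size of $T$ dictated by~\eqref{eq:decHT2}, is exactly what keeps this empty-case value below the decision threshold, and the delicate point is checking that it stays bounded away from the marked-case value $1-\tfrac{m}{n}$; this is precisely why the tolerated percolation probability must shrink like $1/(a_cE)$. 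In the coherent limit $p=0$ this quantity is identically zero, which is why no analogue of this step appears in Szegedy's detection theorem.

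Finally I would close with a standard amplification. Since the expected tester output differs by a constant between the two cases, repeating the experiment a constant number of times and thresholding the empirical mean drives the two-sided error below any fixed constant by a Chernoff bound, while the total number of walk steps used remains $O(T)$. The only genuinely new ingredient relative to the coherent setting is the empty-case estimate $F_{dec}(T)\le 2a_cpT$ above, and verifying that the constants in~\eqref{eq:decHT2} leave a clean separation from $1-\tfrac{m}{n}$ is the step I would scrutinize most carefully.
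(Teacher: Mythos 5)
Your proposal is correct and matches the paper's proof in all essentials: your Hadamard test is exactly the paper's Algorithm~\ref{alg:detect} (controlled-$U_1,\dots,U_t$ sandwiched between Hadamards on an ancilla, with $t$ uniform in $\{0,\dots,T\}$), the marked case is handled identically via Theorem~\ref{teo:dqht} together with the doubled constants coming from $1-\frac{m}{n}\geq\frac{1}{2}$, and your empty-case bound $F_{dec}(T)\leq 2a_cpT$ is only a mild refinement of the paper's observation that with probability at least $(1-p)^{a_cT}\geq 1-a_cpT$ no percolation event occurs and the ancilla then returns $0$ with certainty. Both arguments rest on the same two facts --- that $\ket{\psi(0)}$ is a $1$-eigenvector of $U_P$, and that $a_cpT=O(1)$ under the stated hypothesis on $p$ --- so this is essentially the paper's proof.
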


Select $0\leq t\leq T$ uniformly random and let $U_1,\cdots, U_t$ be unitary operators of the dynamics under decoherence. Algorithm~\ref{alg:detect} creates the state
\begin{equation}\label{eq:alg}
\frac{1}{2}\ket{0}\left(\ket{\psi(0)} + U_{t}U_{t-1}...U_{1}\ket{\psi(0)}\right)+\frac{1}{2}\ket{1}\left(\ket{\psi(0)} - U_{t}U_{t-1}...U_{1}\ket{\psi(0)}\right),
\end{equation}
which has an additional control register of one qubit.
\begin{algorithm}[!htb]
\SetKwInOut{Input}{input}\SetKwInOut{Output}{output}
\Input{$U_1,\cdots, U_t$}
\Output{0 ($M = \emptyset$) or 1 ($M \in \cal M$)}
\Begin{
prepare the state: $\ket{0}\ket{\psi(0)}$\;
apply $H$ to the control register\;
apply $C(U_1),\cdots,C(U_t)$, where $C(U)$ is the controlled-U\;
apply $H$ to the control register\;
measure in computational basis\;\label{step5}
\If{\emph{the control register is 1}}
{\Return{1}}
\Else{\Return{0}}
}
\caption{Detect if marked}\label{alg:detect}
\end{algorithm}
Note that, when using Algorithm~\ref{alg:detect}, we do not know if $U_P$ or $U_{P'}$ are being used.
By measuring the state given by Eq.~(\ref{eq:alg}) in the computational basis, the probabilities of having the control register in state 0 and 1 are
\begin{eqnarray}
P^{(0)} &=& \frac{1}{4}\Big|\Big|\ket{\psi(0)} + U_{t}U_{t-1}...U_{1}\ket{\psi(0)}\Big|\Big|^2,\label{eq:p0}\\
P^{(1)} &=& \frac{1}{4}\Big|\Big|\ket{\psi(0)} - U_{t}U_{t-1}...U_{1}\ket{\psi(0)}\Big|\Big|^2.\label{eq:p1}
\end{eqnarray}

Now, analyzing the algorithm after measuring (Step~\ref{step5}), if $M = \emptyset$, the control register is in state $\ket{0}$ with probability at least $(1-p)^{a_cT}$, that is the probability of having $U = U_P$.
If $M \in {\cal M}$, the control register is in state $\ket{1}$ with probability at least
\begin{equation}\label{eq:p12}
\frac{1}{4(T+1)}\SumPT\Sumt\Big|\Big|\ket{\psi(0)} - \UPt\ket{\psi(0)}\Big|\Big|^2 \geq \frac{1}{4}\left(1-\frac{m}{n}\right).
\end{equation}
Eq.~(\ref{eq:p12}) is obtained from Eq.~(\ref{eq:p1}) by performing two averages: an average on time, because the algorithm chooses a time $t$ at random, and another average on the possible sequences $\vec{P}_T$ since the algorithm can be affected by the decoherence. Therefore, using that $\frac{m}{n}\leq \frac{1}{2}$, we obtain 1 in the control register with probability at least $\frac{1}{8}$, which means that there is at least one marked element.

This result can be improved if we consider the decoherence model which allows only removal of edges from the graph (insertions are not allowed). We can solve the Detection Problem within time $T$ with bounded one-sided error, because the initial condition will be invariant under the action of any $U_{P_i}$ when $M$ is empty. In this case, the probability to obtain 0 when there are no marked vertices is 1.

\section{Conclusions}\label{sec:conc}
We have proposed a decoherence model on Szegedy's quantum walk inspired by percolation graphs. This model is characterized by the possibility of removing or inserting edges at each time step with probability $p$. The graph probability matrix and the evolution operator change at each time step. We were able to define a decoherent quantum hitting time by using a new operator, which is obtained by performing an average over all possible evolution operators affected by the decoherence. Note that when the percolation probability $p$ is zero, the evolution operator is equal to Szegedy's original definition and the decoherent hitting time is equal to the original quantum hitting time~\cite{Szegedy:2004}.

We have proved that, for $p$ small enough, the decoherent quantum hitting time has an additional term which depends linearly on $p$, preserving the quadratic speedup on the quantum hitting time over the classical case for some range of $p$.
Moreover, the Detection Problem can be solved in time of the order of the DQHT with bounded two-sided error, and with bounded one-sided error if the decoherence model does not allow edge insertions.

It is interesting to analyze the behavior of the hitting time in the complement of the original graph. This case occurs when $p = 1$ and $\Udec = U_{\bar{P}'}$. Since the initial condition is associated to the original graph $P$, it is easy to show that $\ket{\psi(0)}$ is invariant under the action of $U_{\bar{P}'}$. This occurs because the initial condition is a superposition over all edges of the original graph and those edges do not exist in its complement. Thus, $\ket{\psi(0)}$ is an eigenvector  of $U_{\bar{P}'}$ with eigenvalue 1
and the quantum hitting time goes to infinity when $p$ goes to 1.

In future works, it is interesting to obtain a better upper bound for the DQHT. The approximations used in Eqs.~(\ref{eq:gm2}), (\ref{eq:gmmbot2}), and (\ref{eq:gmbot2}) help the analytical calculations, but seem to weaken the final value of the bound. It is also interesting to analyze the decoherence effects on spatial search algorithms.

\section*{Acknowledgements}
We thank F.L. Marquezino for fruitful discussions. The authors acknowledge financial support from FAPERJ n. E-26/100.484/2012, CNPq, and CAPES.



\end{document}